\theoremstyle{plain}
\newtheorem{theorem}{Theorem}[section]
\theoremstyle{definition}
\title{On expressive rule-based logics}
\author{
Antti Kuusisto\\
Tampere University 
}
\date{}
\begin{document}

\maketitle


\vspace{0.5cm}

\begin{abstract}
\noindent
We investigate a family of rule-based logics.
The focus is on very expressive languages. We provide a range of characterization 
results for the expressive powers of the logics and relate them with 
corresponding game systems.
\end{abstract}

\vspace{1.5cm}

\section{Introduction}

In this article we introduce and investigate very 
powerful logics based on rules in the style of Datalog 
(see, e.g., \cite{ebbinghaus}, \cite{libkin}, \cite{datal})
and Prolog. The point is then to couple the related 
languages with the framework developed in \cite{gamesandcomputation}. We
also investigate the Turing-complete logic  
defined in \cite{turingcomp}, \cite{gamesandcomputation} and
based on game-theoretic semantics. In fact, many of the 
results obtained below have counterparts in the setting of the 
Turing-complete logic of \cite{turingcomp}, \cite{gamesandcomputation}.

We begin the story by a recap of \emph{systems} as
defined in \cite{gamesandcomputation}. We
then define a rule-based logic $\mathrm{RLO}$ 
which is tailor-made for classifying finite 
ordered structures. We show how to 
capture $\mathrm{RE}$ with $\mathrm{RLO}$.
The logic $\mathrm{RLO}$ is rule-based, with, inter alia, Datalog-style
rules and beyond. Computations with $\mathrm{RLO}$ are deterministic. We
then lift the restriction to ordered structures and 
investigate $\mathrm{RL}$ which we show to capture $\mathrm{RE}$ 
without the assumption of models having a distinguished linear order.
The next step is to consider systems with nondeterministic rules.
To this end, we define $\mathrm{NRL}$. As an extension of $\mathrm{RL}$ it 
trivially captures $\mathrm{RE}$, but we show a somewhat stronger result
relating to model constructions. 
We also establish an analogous result for a version of
the Turing-complete logic from \cite{gamesandcomputation}. In 
fact, a rather similar result has
already been established in \cite{gamesandcomputation}.
We then investigate $\mathrm{GRL}$ which is tailor-made for 
\emph{systems} as defined in \cite{gamesandcomputation}.

Concerning the logic $\mathrm{RL}$ and its many variants we study,
there exist various languages with essentially the same 
model recognition capacity.
These include, inter alia, the while languages discussed in \cite{vianu} and
the Turing complete logic of \cite{turingcomp}.
However, $\mathrm{RL}$ and its variants have quite nice qualities, relating 
especially to simplicity and flexibility of use. Notably, the
logics $\mathrm{NRL}$ and $\mathrm{GRL}$ offer
various interesting features for modeling
scenarious, thereby having useful properties 
that go beyond mere recognition. Furthermore, compared to the 
Turing-complete logic of \cite{turingcomp}, the variants of $\mathrm{RL}$ are 
inductive whereas the logic of \cite{turingcomp} is coinductive.\footnote{However, it is not
difficult to simulate inductive computations rather directly in the
logics of \cite{turingcomp},\cite{gamesandcomputation} simply by considering the 
corresponding computation tables (or game graphs of the 
semantic game). And also the reverse 
simulation of the logics in \cite{turingcomp},\cite{gamesandcomputation} is possible.}

\section{Preliminaries}

We denote models by letters of type $\mathfrak{A}$, $\mathfrak{B}$, $\mathfrak{M}$, and so on.
The domain of $\mathfrak{A}$ is denoted by $A$ and similarly for the other letters. For 
simplicity, we sometimes write $R$ to indicate both the relation $R^{\mathfrak{A}}$
and the relation symbol $R$.

For simplicity, models are assumed to have a relational vocabulary (no function or constant symbols).
Also, they are assumed finite (with also a finite vocabulary),
although it will be easy to see that many of the results below do not really depend upon
this assumption. Also, the exclusion of function and constant symbols could be easily 
avoided by considering partial function symbols. We omit this option indeed for the sake of
simplicity. We assume there exists a canonical linear ordering $<_{\mathit{symb}}$ of the
full infinite set of all relation symbols. This enables unique binary encodings of models.

The \textbf{encoding} of a model $\mathfrak{M}$ with respect to a linear ordering
ordering $<$ of $M$ is the binary string $\mathit{enc}_<(\mathfrak{M})$ 
defined such that it begings with $|M|$ bits $1$ followed by a single $0$, and
after this are the encodings of the relations as follows.
\begin{enumerate}
\item
The relations are encoded as a concatenation of all the relation encodings, 
one relation at a time, in the order 
indicated by $<_{\mathit{symb}}$.
\item
For a $k$-ary relation $R^{\mathfrak{M}}$, we 
simply list a bit string of length $M^{k}$ 
where the $n$th bit is $1$ iff
the $n$th tuple (with respect to the standard lexicographic order of $M^k$
defined with respect to the linear ordering $<$ of $M$) is in
the relation $R^{\mathfrak{M}}$.
\end{enumerate}
We note that this encoding scheme is similar to the one defined in \cite{libkin}. 
We also note that obviously $<$ does not necessarily need to be in the vocabulary of the
model to be encoded.
We may write $\mathit{enc}(\mathfrak{M})$ when the linear ordering $<$ is 
known from the context of irrelevant. It is of course obvious that different 
linear orderings of the model domain are bound to give different encodings.

The logic $\mathcal{L}_{\mathrm{RE}}$ \cite{sofsem} consists of
sentences of the form $IY \exists X_1\dots \exists X_n\psi$
where the part $\exists X_1\dots \exists X_n\psi$ is a formula of existential second order logic
and $IY$ a new operator (where $Y$ is a unary second-order relation variable).
Here $\psi$ is the first-order part. 
We have $\mathfrak{M}\models IY \exists X_1\dots \exists X_n\psi$ if we can
expand the domain of $\mathfrak{M}$ by a finite set $S$ of new 
elements such that $$(\mathfrak{M}+S, Y\mapsto S)\models \exists X_1\dots \exists X_n\psi$$
where $(\mathfrak{M}+S,Y\mapsto S)$ is the model obtained
from $\mathfrak{M}$ by the following operations.
\begin{enumerate}
\item
We first extend the domain of $\mathfrak{M}$ by the set $S$ of new elements.
The relations are kept as they are.
\item
We then expand the so obtained model to interpret the new unary 
symbol $Y$ as the set $S$. (That is, $Y$ names the fresh elements in the domain.)
\end{enumerate}

Let $\tau$ be a vocabulary and consider a class $\mathcal{C}$ of finite $\tau$-models. We say 
that a Turing machine $\mathit{TM}$ defines a semi decision
procedure for $\mathcal{C}$ if $\mathit{TM}$ accepts a bit string $s$
iff $s = \mathit{enc}_<(\mathfrak{M})$ for some $\mathfrak{M}\in\mathcal{C}$ and
some linear ordering $<$ of the 
domain of $\mathfrak{M}$. When not accepting, the machine does not have to halt.
A model class is in $\mathrm{RE}$ (\emph{recursively enumerable}) iff there is a Turing 
machine that defines a semi decision procedure for it.
When considering ordered models, i.e., models where some distinguished
predicate $<$ in the vocabulary is
always a linear ordering of the domain, we can---even then---use the 
above definition for semi decision procedures for model classes. However, we can then also use
the following clearly equivalent definition: a Turing machine $\mathit{TM}$ defines a semi decision
procedure for $\mathcal{C}$ if $\mathit{TM}$ accepts a bit string $s$
iff $s = \mathit{enc}_<(\mathfrak{M})$ for some $\mathfrak{M}\in\mathcal{C}$ and the
distinguished ordering $<$ of the 
domain of $\mathfrak{M}$. Both definitions result in the same class of semi decidable classes of ordered models.

\section{On systems}

\subsection{Elements of systems}

We now consider $\textbf{systems}$ as defined in \cite{gamesandcomputation}.
Let $\sigma$ be a signature and $A$ a set of actions. Let $I$ be a set of agents.
(Technically $A$ and $I$ are simply sets.) Consider a set $S$ of structures
over the vocabulary $\sigma$. We note that in one interesting
and significant case, $\sigma$ has only unary relation symbols and $S$ is simply a
set of states (or points with some local information based on 
unary predicates, i.e., propositional valuations). Then we will ultimately end up with just a
slight generalization of Kripke models.
However, it is also instructive to think of the structures in $S$ 
simply as relational first-order $\sigma$-models in the usual sense of model theory.

Let $\mathcal{T}$ denote the set of $(S,A,I)$-sequences; as
defined in \cite{gamesandcomputation}, an $(S,A,I)$-\textbf{sequence} is a
finite tuple \[(\mathfrak{M}_0,a_0,\dots , \mathfrak{M}_n,a_n)\] where 
each $\mathfrak{M}_i$ is a structure in $S$ and $a_i$ is a tuple of
actions in $A^I$.  
We note that the following generalizations could be possible (but we
omit considering them here explicitly).
\begin{enumerate}
\item
Instead of letting $a_i\in A^I$ be a tuple of actions involving
any individual actions from $A$, we can define a function that limits the available actions based on
the earlier sequence, meaning that a set $A[i]\subseteq A$ can be
determined by $(\mathfrak{M}_0,a_0,\dots , \mathfrak{M}_{i})$
and then we must have $a_i \in (A[i])^I$.
\item
Furthermore, we can let the set of active agents be $I[i]\subseteq I$, similarly 
determined by $(\mathfrak{M}_0,a_0,\dots , \mathfrak{M}_{i})$. Then we
must have $a_i\in A[i]^{I[i]}$.
\item
Yet further, we can make the actions available to each individual agent depend also on
the earlier sequence. Formalizing all this is a triviality. Indeed, we
must then have $a_i\in A[i]^{I[i]}$ with the 
additional condition that the $k$-th member of $a_i$ (the action of the $k$-th agent in $a_i$)
must be chosen from $A[i][k]\subseteq A$ of 
actions available to agent $k$ in round $i$, with $A[i][k]$ being 
determined by $(\mathfrak{M}_0,a_0,\dots , \mathfrak{M}_{i})$.
\end{enumerate}
Note that also the empty sequence is a $(S,A,I)$-sequence.

A \textbf{system frame base}, as
given in Definition 3.1 of \cite{gamesandcomputation}, is a pair $(S,F)$ where $S$ is a
set of $\sigma$-structures and $F$ is a function $F: T\rightarrow \mathcal{P}(S)$ where $T$ is
some subset of the set of all
sequences in $\mathcal{T}$ (where $\mathcal{T}$ denotes the set of all $(S,A,I)$-sequences).

According to Definition 3.2 in
\cite{gamesandcomputation}, a \textbf{system frame} is a tripe $(S,F,G)$
where $(S,F)$ is a system frame base and $G$ is a function 
mapping from some set $E\subseteq \mathcal{T}\times \mathcal{P}(S)$
into $S\cup \{\textbf{end}\}$ such that when $G((x,U))\not= \textbf{end}$, the
condition $G((x,U))\in U$ holds.\footnote{Note here that $G$ is undefined or outputs \textbf{end}
when the set $U$ is the empty set.} Thus $G$ is essentially a
choice funtion that \emph{chooses} the
\emph{actual next world} from the set of 
\emph{possible future worlds} chosen by $F$. It can be interpreted, e.g., as \emph{chance} or some kind of a grand controller of the system.

Finally, a \textbf{system}, as given in Definition 3.3 of \cite{gamesandcomputation}, is
defined as a
tuple $$(S,F,G,(f_i)_{i\in I})$$ where $(S,F,G)$ is a 
system frame and each $f_i$ is a function $f: V_i\rightarrow A$
with $V_i$ being some subset of the set $\mathcal{T}'$ of tuples
$$(\mathfrak{M}_0,a_0,\dots , \mathfrak{M}_k),$$
that is, tuples that are like $(S,A,I)$-sequences but with the last tuple of
actions (which in the
above tuple would be $a_k$) removed. Intuitively, $f_i$ is a 
strategy that gives an action for the agent $i$ based on
sequences of the above mentioned type, i.e., the type
$$(\mathfrak{M}_0,a_0,\dots , \mathfrak{M}_k).$$
In fact, as in \cite{gamesandcomputation}, we can even identify $f_i$ with
the agent. The agent is the strategy the agent follows.
For conveniences, let us call sequences of type $$(\mathfrak{M}_0,a_0,\dots , \mathfrak{M}_k)$$
\textbf{structure ended} $(S,A,I)$-\textbf{sequences}, or simply \textbf{structure ended sequences} 
when $(S,A,I)$ is clear from the context or irrelevant. We note that $S$ is 
called the \textbf{domain} of the system, and it should not be confused by 
the domains of individual models in $S$.

Systems evolve as given in \cite{gamesandcomputation}. However, the next section
defines some scenarios with a closer look at constraints on system evolution.

\subsection{Controlling systems}

As discussed in \cite{gamesandcomputation}, it is interesting to consider a 
framework where the agents do not see the structures $\mathfrak{M} \in S$ 
directly, but a perception of them. In that article, this is realized by 
defining two functions $p_i$ and $d_i$ for each agent $i\in I$. There are
many ways to define $p_i$ and $d_i$. 
Here we define the function $p_i$ so that it maps from the set of nonempty
structure ended sequences in the underlying system to a new set $P_i$ of models;
the models in $P_i$ can be considered, e.g., as 
\textbf{perceived} (or perceivable) 
\textbf{models} for the agent $i$. Thus $p_i$ is a some kind of a
perception function that gives the current perceived model to an agent. The signature of 
the models in $P_i$ need not be similar to the signature of the models in the domain $S$ of the
underlying system.\footnote{Often the models $P_i$ can be just propositional valuations. However, the same also applies to the domain $S$ of the system, and the scenario where both perceived models and the models in $S$ are propositional valuations is of course important.}

The function $d_i$ maps
from a set $M_i\times \langle P_i \rangle$ into the set $A_i\times M_i$. Here $M_i$ is
simply a set that
contains the \textbf{mental states} of the agent $i$ and $\langle P_i\rangle$ is 
the set of finite nonempty sequences
$$(\mathfrak{P}_0,a_0,\dots , \mathfrak{P}_{k-1},
a_{k-1},\mathfrak{P}_k)$$
that contain perceived models $\mathfrak{P}_j\in P_i$ of the agent $i$ and
action tuples $a_j\in A^I$ by all the agents (so $P_i$ is indeed the set of perceivable 
models of agent $i$). And $A_i$ is the set of actions available to agent $i$, so in the
general case, $A_i = A$. Note that these 
sequences end with a structure, so
they are quite similar to the structure ended sequences
$$(\mathfrak{M}_0,a_0,\dots , \mathfrak{M}_{k-1},
a_{k-1},\mathfrak{M}_k)$$
of the system itself. However, in the
perception sequences we only have the perceived rather than real models (the 
real models are the models in the system domain $S$).
Note that it is very natural to limit $d_i$ so that it does not 
depend on the actions of agents other than the agent $i$. This 
means that each $a_j$ in a perception sequence is replaced by the
single action $a_j(i)$ by agent $i$ in the tuple $a_j\in A^I$.
Then information (of varying quality) about the actions of other agents in the 
perception sequences can be considered to be encoded, for example, in (if anywhere) the 
next perceived model $\mathfrak{P}_{j+1}$.
Now, it is highly natural to make $d_i$ depend only on the current 
mental state and the last perceived model of $i$. We call 
such a $d_i$ mapping from $M_i\times P_i$ into $A_i\times M_i$ a \textbf{simple} $d_i$.
Whether we use a simple $d_i$ or not, the system evolves as follows.

First we have a system and the functions $d_i$ and $p_i$ for each agent $i$. We
also have an initial mental state $m_{\mathit{initial}}(i)\ \in\ M_i$ for each agent $i$.
The system itself determines an initial structure $\mathfrak{M}_0\in S$ (or alternatively, we
arbitrarily just appoint the structure $\mathfrak{M}_0$).
Inductively, from any structure ended sequence $$(\mathfrak{M}_0,a_0,\dots , \mathfrak{M}_k)$$
we obtain, using
the function $p_i$ for each agent $i$, the next perceived model $\mathfrak{P}_i$ for the agent $i$.
From there, we use $d_i$ to determine the new mental state $m\in M_i$ of
the agent and the action $a_i'\in A_i$ by the agent. For simple $d_i$, we
have $d_i:M_i\times P_i\rightarrow A_i\times M_i$ and for a
non-simple one $d_i:M_i\times \langle P_i\rangle \rightarrow A_i\times M_i$.
(There are of course many relevant variants between simple and general non-simple, e.g.,
taking into account the full tuple of immediately previous actions by the agents.)
With the mental state of each agent $i$ updated, and with an action $a_i'\in A_i$ 
for each agent determined, we do the following. We build the tuple $a_k\in A^I$ of
actions by all agents from the (at this stage known) individual actions $a_i'$ for each agent $i$.
Then we add this tuple $a_k$ to the structure ended sequence $$(\mathfrak{M}_0,a_0,\dots , \mathfrak{M}_k)$$ from where we started the description of the inductive step. Then, based on 
$$(\mathfrak{M}_0,a_0,\dots , \mathfrak{M}_k,a_k),$$
the system determines (using $F$ and $G$) the new model $\mathfrak{M}_{k+1}$.
Then we of course repeat the step similarly from the new structure ended 
sequence $$(\mathfrak{M}_0,a_0,\dots , \mathfrak{M}_{k+1}).$$

A particularly interesting setting (let us call it \textbf{M-finitary}) is the one 
where each $M_i$ is finite. Also the case (call it \textbf{MP-finitary}) where 
each $M_i$ and each $P_i$ is finite is interesting. The case where each $M_i$ and $P_i$ and also 
the domain $S$ of the system is finite can be called \textbf{MPS-finitary}.
The \textbf{p-quasi-finitary} case is the one where the range of each $p_i$ is 
finite (while $S$ need not be); the range of $p_i$ being finite means that $p_i$ is
intuitively finitary in the sense that it sees only finitely many (intuitively different)
cases that it maps differently. That is, the domain of $p_i$
partitions into finitely many equivalence 
classes (each class sharing an output) and $i$ sees the inputs in each class as being similar to
each other or even indistinguishable from each other.
We can even redefine the domain of each $p_i$ to consist of, e.g., finitely many isomorphism
classes of some relation $R_i$ interpreted by the models in $S$. The idea is
that $R_i$ represents the sphere of (menta, physical, or a combination of those) 
perception of the agent $i$. 
The further requirement of $R_i$ having only finitely many tuples could also perhaps be forced.

All in all, at least the following cases should be specially distinguished.
The first one is the MP-finitary case 
where each $d_i$ is simple (in the formal sense defined above), and furthermore, the set $I$ of
agents is also finite. Let us call the case \textbf{1-elementary}.
Note that in the $1$-elementary case, we can 
always assume that $A$ is finite (as the union of the ranges of the functions $d_i$ is finite). 
The 1-elementary which is also $p$-quasi-finitary is highly interesting (let us
call it \textbf{$2$-elementary}). In the $2$-elementary case, if each $p_i$
furthermore depends only on (i.e., the output is always determined by) the current 
model $\mathfrak{M}$ in the system domain $S$, we
call the case \textbf{elementary} and each $p_i$ is called an  
\textbf{elementary perception function}. Obviously
then we can simply regard $S$ as the domain of the functions $p_i$.
Finally, the elementary case where $S$ is finite can be called \textbf{strongly elementary}. 
The $\textbf{a-elementary}$ case is the $2$-elementary case where each $p_i$ depends on
the current model and the previous tuple of
actions.\footnote{Of course in the very beginning, there is no
previous tuple of actions, but in the subsequent rounds there is.}
The \textbf{strongly a-elementary} case is the a-elementary case with
the domain $S$ being finite.
The following sections contain logics for many different scenarios of
system simulation with ideas
visioned already in \cite{double}.


\section{Rule-based logics}

In this section we consider rule-based logics. While there are 
similarities to systems such as, e.g., Datalog variants, there are also various notable 
differences. We begin by considering ordered models.

\subsection{Ordered models}

Let $\tau$ be a relational vocabulary that contains a distinguished 
binary relation symbol $<$ which will always be considered a linear order
over the domain of the model investigated. We exclude constant symbols and 
function symbols from the vocabulary for the sake of simplicity. They could be
added however, especially if considering partial function symbols (noting also that
partial constant symbols would be interpreted as constants that have at most one
reference point in the model domain). Nevertheless, we indeed let $\tau$ be a
relational vocabulary here to streamline the exposition. We note that $\tau$ is assumed to be
finite (although it will be trivial to see which results 
would go through for infinite $\tau$). The 
vocabulary $\tau$ can contain nullary relation symbols. Recall that a 
nullary symbol $Q$ is interpreted either as $\top$ or $\bot$ (true or false) by a model 
with $Q$ in the vocabulary.

Let $\tau^+$ be an \textbf{extended vocabulary}, $\tau^+ \supseteq \tau$, 
where the part $\tau^+ \setminus \tau$ contains ``relation symbols''
dubbed \textbf{tape predicates}. These are exactly as relation symbols but
they are not considered to be part of the underlying vocabulary $\tau$. Tape 
predicates can be nullary. In the beginning of computation, tape
predicates are interpreted as the empty relation (and each
nullary tape predicate as $\bot$). On the technical level, we shall mostly try to
reserve the terms \emph{relation symbol} and \emph{tape predicate} for different and
disjoint sets of
symbols. Tape predicates are auxiliary and relation symbols part of the input model.
However, we can of course define models that interpret tape predicates as if they were relation symbols.

A \textbf{transformation rule of the first kind} is a construct of the form
\[X(x_1,\dots , x_k)\ :-\ \ \varphi(x_1,\dots , x_k) \]
where $X\in \tau^+$ is a $k$-ary symbol and $\varphi(x_1,\dots , x_k)$ is a
first-order $\tau^+$-formula\footnote{The set of
relation symbols and tape predicates in a $\tau^+$-formula is required to be
some subset of $\tau^+$. A $\tau^+$-formula may also be called a formula in 
the vocabulary $\tau^+$.} whose set of free variables is
precisely $\{x_1,\dots , x_k\}$. We note, expecially with the reader 
familiar with Datalog and similar languages in mind, that $X$ can but does not have to be a 
tape predicate, it can be any symbol in $\tau^+$. We note also, concerning the 
variables $x_1,\dots , x_k$, that the variables do not have to be pairwise distinct.
The symbol $X$ is called the $\textbf{head}$ symbol of the rule, 
and the left-hand side formula $X(x_1,\dots , x_1)$ simply the \textbf{head} of
the rule. The right-hand side formula $\varphi(x_1,\dots , x_k)$ is
the \textbf{body} of the rule.
Transformation rules of
the first kind will also be called \textbf{$1$-transformers}. We stress that the
formula $\varphi(x_1,\dots ,x_k)$ can indeed be any first-order formula within the
given constraints: it does not have to be free or negations or quantifiers or anything like that.

Now, let $\mathfrak{M}$ be a $\tau^+$-model.\footnote{The
set of relation symbols and tape predicates
that a $\tau^+$-model interpretes is precisely $\tau^+$.} Consider a $1$-transformer $F$ of
the form \[X(x_1,\dots , x_k)\ :-\ \ \varphi(x_1,\dots , x_k).\]
We let $\mathcal{F}$ be the operator such that 
$$\mathcal{F}(F,\mathfrak{A})\ =\  
\{\, (a_1,\dots , a_k)\in A^k\, |\, 
\mathfrak{A}, (x_1\mapsto a_1,
\dots , x_k\mapsto a_k)\models \varphi(x_1,\dots , x_k)\ \}$$
where $(x_1\mapsto a_1,
\dots , x_k\mapsto a_k)$ is the assigment that maps $x_i$ to $a_i$ for each $i\in \{1,\dots , k\}$.
Therefore, $\mathcal{F}$ is the operator that takes any $1$-transformer $F$ and
model $\mathfrak{A}$ (where the symbols in $F$ are in the vocabulary) as an 
input and gives the relation determined by the rule body as the output. To put this shortly,
the operator $\mathcal{F}$ evaluates the rule $F$ on the input model; the evaluation process is
similar to the one in, e.g., Datalog. As in Datalog, we
can use $\mathcal{F}$ to update $\mathfrak{A}$ by replacing the relation $X^{\mathfrak{A}}$ 
corresponding to the head symbol by the relation $\mathcal{F}(F,\mathfrak{A})$.

A \textbf{transformation rule of the second kind}, or a \textbf{$2$-transformer}, is the
construct denoted by
\[I\]
which simply adds a single domain element to the current model and
extends the relation $<$ so that the new element becomes the 
last element in the order.
Other relations are kept as they are. A \textbf{conditional $2$-transformer} is a
rule of the form 
\[ I\ :-\ \ \varphi\]
where $\varphi$ is a first-order sentence in the vocabulary $\tau^+$.
The interpretation is that if $\varphi$ holds, then we extend 
the domain (in the same way as the rule $I$ does), and
otherwise we do not extend the domain, we just move on. A \textbf{transformation 
rule of the third kind}, or a \textbf{$3$-transformer}, is a rule of the form
\[ D\ :-\ \ \psi(x) \]
where $\psi(x)$ is a first-order formula in the
vocabulary $\tau^+$ and with a single free variable, $x$.
The rule deletes from the model $\mathfrak{A}$ precisely all elements $a_0$
such that we have $\mathfrak{A}, (x\mapsto a_0) \models \psi(x)$. One can 
also easily define natural conditional $1$-transformers and $3$-transformers, 
with the idea that whether or not they are executed depends on an
additional first-order sentence.
If the first-order sentence holds, we execute the
rule, and if the first-order sentence fails to hold, we just move on without changing the model.

A \textbf{control rule} is a rule of the form 
\[ k \]
where $k$ is a positive integer written in binary. The rule asserts that we
should go execute the rule number $k$ if such rule exists (i.e., we jump to
the rule $k$ if possible). If such rule
does not exist, the computation halts. We shall define later on how rule
numbers are used exactly. A \textbf{conditional control rule} is a rule of the form
\[ k\ :-\ \ \chi \]
where $k$ is a positive integer written in binary and $\chi$ is a
first-order sentence in the vocabulary $\tau^+$. The rule states that if $\chi$
holds, then we should go to execute the rule number $k$. If $\chi$ does not hold, we move on to the
next rule (and if there is no next rule, the computation halts). If
there is no rule $k$ at all but $\chi$ holds, then the computation halts.

A \textbf{program} is a finite
sequence of rules, i.e., a
list of the form
\begin{align*}
1:&\empty{ }\ F_1\\
\vdots \\
k: & \empty{ }\ F_k
\end{align*}
where each $F_i$ is a rule (please see an example in
the proof of Theorem \ref{rlotheorem} below). We begin each 
\textbf{line} with a number (the rule number) and a colon.
The rule number---officially written in binary---helps in using control rules.
However, the rule number and colon can of course be dropped, as they 
increase in the obvious way,
beginning with $1$. The program is executed one rule at a time (unlike
Datalog), starting from rule $1$ and proceeding from there:
if we are at rule $F$, and it is not a control rule, we first execute the rule and
then move on to the next rule below the current one. Also, if the last rule is
executed (and it is not a control rule leading to a jump to some existing rule), then the
computation ends after that.
Control rules allow for jumps that do not necessarily proceed in the 
way indicated by the rule numbers.
If a control rule leads to a rule number (i.e., line number) for
which there is no rule, the computation ends.
Recall that the transformer rules transform the model in
the way described above, so the model typically changes as the computation progresses.

Consider a program $\Pi$ where the set of relation symbols is $\sigma_0$ and
tape predicates $\sigma_1$ (obviously $\sigma_0\cap
\sigma_1 = \emptyset$). Let $\mathfrak{M}$ be a model
whose vocabulary is $\tau\supseteq \sigma_0$. (Note
that $\tau\cap\sigma_1 = \emptyset$.) Computation with the input $\mathfrak{M}$ then 
proceeds as described above, starting with the 
expansion of $\mathfrak{M}$ to the vocabulary $\tau^+ = \tau \cup \sigma_1$
such that tape predicates are interpreted as empty relations (and $\bot$ for
nullary tape predicates). We call this expansion the $\Pi$\textbf{-expansion of $\mathfrak{M}$}.

Now, consider a scenario where $\sigma_1$ contains
the nullary tape predicate $X_{true}$. We consider this a special tape predicate and
write $\mathfrak{M}\models \Pi$ for a model of vocabulary $\tau$
and a program of vocabulary $\tau^+ = \tau\cup \sigma_1$ if the computation
beginning with the the $\Pi$-expansion of $\mathfrak{M}$ ultimately 
halts such that $X_{true}$ holds (is equal to $\top$) in the final model at halting.

We call $\mathrm{RLO}$ (for \emph{rule logic with order}) the system 
consisting of programs with $1$-transformers, $2$-transformers and 
conditional control rules, as described above.
Conditional transformer rules are not included. Note that non-conditional 
control rules can of course be simulated with conditional control rules.
We say that $\mathrm{RLO}$ \textbf{defines} a
class of $\tau$-models $\mathcal{C}$ if there is a program $\Pi$ 
such that for all $\tau$-models $\mathfrak{M}$, we 
have $\mathfrak{M}\models\Pi$ iff $\mathfrak{M}\in \mathcal{C}$. Note here
that we of course restrict attention to finite models only.
The same definition of definability also applies to other logics we
shall consider in this article.

\begin{theorem}\label{rlotheorem}
Let $\tau$ be a vocabulary with $<$ and limit attention to ordered $\tau$-models.
Then, $\mathrm{RLO}$ can define a class $\mathcal{C}$ of 
$\tau$-models iff $\mathcal{C}$ is in $\mathrm{RE}$.
\end{theorem}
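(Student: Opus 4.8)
The plan is to establish the two inclusions separately. For the direction that every $\mathrm{RLO}$-definable class lies in $\mathrm{RE}$, I would fix a program $\Pi$ and describe a Turing machine that, given $\mathit{enc}_<(\mathfrak{M})$, first reconstructs $\mathfrak{M}$ together with its distinguished order and then simulates $\Pi$ line by line. Every rule available in $\mathrm{RLO}$ has a plainly computable effect on a finite model: evaluating a first-order body $\varphi$ is decidable, a $1$-transformer recomputes a single relation as the set of its satisfying tuples, the $2$-transformer $I$ appends one new element as the $<$-greatest element while leaving the other relations fixed, and a conditional control rule tests a first-order sentence to decide the next line to execute. Since $\Pi$ is deterministic, the simulation is faithful, and the machine accepts precisely when the computation halts with $X_{true}$ equal to $\top$, while diverging computations need not halt. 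The accepted encodings are then exactly $\{\, \mathit{enc}_<(\mathfrak{M}) : \mathfrak{M} \models \Pi \,\}$, which is a semi decision procedure for $\mathcal{C}$.

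For the converse, let $\mathit{TM}$ give a semi decision procedure for $\mathcal{C}$; I would build a program $\Pi$ that simulates $\mathit{TM}$ on $\mathit{enc}_<(\mathfrak{M})$ in three phases linked by conditional control rules. At the very start $\Pi$ marks the input domain with a unary tape predicate made true everywhere by one $1$-transformer, and all first-order definitions used below are relativized to this predicate, so that later applications of $I$ supply scratch cells without changing the meaning of $|M|$ or the tuple space. In the \emph{encoding phase}, $\Pi$ writes $\mathit{enc}_<(\mathfrak{M})$ onto a tape whose cells are the domain elements indexed by $<$. Here the distinguished order is essential: from it one defines in first-order logic the successor relation, the least and greatest original elements, and the lexicographic order on $k$-tuples for each arity $k$ occurring in $\tau$. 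The encoding region has length polynomial in $|M|$, so every one of its positions is named by a bounded tuple of original elements, and the bit at each position---the block of $|M|$ ones, the separating $0$, and the successive relation blocks laid out in the order $<_{\mathit{symb}}$---is a first-order definable function of that position, so the input bits can be written by a bounded block of $1$-transformers.

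In the \emph{simulation phase}, the configuration of $\mathit{TM}$ is stored in tape predicates: a nullary predicate for each control state, a unary predicate marking the unique head cell, and unary predicates recording the symbol in each cell. Each instruction of the finite transition table is realized by a short block of $1$-transformers that rewrites the head cell, moves the head mark to the $<$-predecessor or $<$-successor of its position, and updates the state predicates; conditional control rules testing the state predicates select the block to run. When the head is on the $<$-greatest element and must move right, $\Pi$ applies $I$ to create a fresh maximal cell and continues there, so the simulated tape is effectively unbounded. In the \emph{acceptance phase}, entering an accepting state fires a $1$-transformer setting $X_{true}$ to $\top$ followed by a halt, whence $\mathfrak{M} \models \Pi$ iff $\mathit{TM}$ accepts $\mathit{enc}_<(\mathfrak{M})$ iff $\mathfrak{M} \in \mathcal{C}$.

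I expect the main obstacle to be the bookkeeping of the encoding phase, namely verifying that the position-to-bit map is genuinely first-order definable from $<$: one must pin down the block boundaries $|M|$, $|M|+1$, and the cumulative offsets $|M|+1+\sum_{R' <_{\mathit{symb}} R} |M|^{\mathrm{ar}(R')}$ and express, using the lexicographic tuple order, which tuple of a given relation a position inside a relation block refers to. Once this addressing is in place, the unbounded-tape mechanism through $I$ and the faithfulness of the finitely many transition blocks to a single step of $\mathit{TM}$ are routine, so the definability of the layout is where the real care is required.
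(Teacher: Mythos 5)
Your proposal is sound, but it takes a genuinely different route from the paper. The paper does not simulate a Turing machine inside $\mathrm{RLO}$ at all: it invokes the known capture result for $\mathcal{L}_{\mathrm{RE}}$, collapses the existential second-order prefix to a single relation variable $Z$ of arity $k$, and then writes a short generate-and-test program that enumerates all interpretations of $Z$ in lexicographic order (via a first-order ``increment'' formula $\mathit{step}_Z$), tests the first-order matrix $\psi'$ against each, and, once all interpretations over the current domain are exhausted, applies $I$ and restarts; the predicate $Y$ collects the freshly added elements. This makes the whole construction a nine-line program and pushes all the hard work into the cited capture theorem for $\mathcal{L}_{\mathrm{RE}}$. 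Your direct simulation is self-contained and makes no use of $\mathcal{L}_{\mathrm{RE}}$, at the price of the addressing bookkeeping you correctly identify as the crux. One seam you should smooth out: your encoding phase indexes tape positions by $k$-tuples of original elements, but your simulation phase speaks of unary head and content predicates with moves to the $<$-successor of a single element. You need an explicit bridge --- either keep the head and cell contents $k$-ary throughout and move along the first-order-definable lexicographic successor on tuples of the frozen original domain (reserving $I$-created elements for overflow cells), or first loop with $I$ to create enough fresh elements and transfer the tuple-indexed bits onto a unary tape, which itself needs a step-by-step first-order-definable correspondence between the $n$-th tuple and the $n$-th new element. Both fixes are routine given the relativization to the original domain that you already set up, so the argument goes through; it is simply considerably longer than the paper's.
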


\begin{proof}
It is clear that computations with $\mathrm{RLO}$ can be simulated by a
Turing machine. For the other direction, suppose a Turing machine $\mathit{TM}$ 
recognizes some class $\mathcal{C}$ of ordered $\tau$-models. Now recall from the 
preliminaries the logic $\mathcal{L}_{\mathrm{RE}}$ that can define precisely the
semi-decidable classes of models. Therefore 
there is a sentence $\mathrm{I}Y\exists X_1\dots \exists X_n\varphi$ of $\mathcal{L}_{\mathrm{RE}}$
that defines $\mathcal{C}$ with respect to the class of all finite ordered $\tau$-models.
Now notice that for any formula $\exists Z_1\dots \exists Z_m\psi$ of 
existential second-order logic (with $\psi$ being first-order), there clearly exists an equivalent 
formula $\exists Z \beta$ (with $\beta$ being first-order) where the 
arity of $Z$ is the sum of the arities of the predicates $Z_1,\dots , Z_m$.
Now, let $\exists Z\psi'$ (with $\psi'$ being first-order) be an existential
second-order the formula 
equivalent to $\exists X_1\dots \exists X_n\varphi$, and  
suppose that the sum of the arities of the 
relation variables $X_1,\dots , X_n$ is $k$. Thereby the arity of $Z$ is $k$. 
Note that now $\mathrm{I}Y \exists Z\psi'$ is equivalent to the formula $\mathrm{I}Y\exists X_1\dots \exists X_n\varphi$ of $\mathcal{L}_{\mathrm{RE}}$
that defines $\mathcal{C}$. We will 
next write a program that is equivalent to $IY\exists Z \psi'$. Note that $Y$ and $Z$ will of
course be tape predicates, and the program will also use some other tape predicates.

Let $\mathit{step}_{Z}(x_1,\dots , x_k)$ be the first-order formula whose interpretation
over any ordered model $\mathfrak{N}$ (which interprets $Z$) is the relation $R\subseteq N^k$ 
such that the binary encoding of $R$ is the string that points to the integer $z$
that is one larger than the integer $z'$ that $Z^{\mathfrak{N}}$ points to. If $z'$ is
already the maximum integer, then $R$ will simply be equal to $Z^{\mathfrak{N}}$.
The binary encoding here for relations is of course the one described in the preliminaries.
The formula $\mathit{step}_Z(x_1,\dots ,x_k)$ is routine to write using $<$.

Next, let $\mathit{max}(Z)$ be the first-order formula which states
that $Z$ is the total relation over the current domain. Note that this is
equivalent to the binary encoding of $Z$ being the maximum 
string (containing only bits $1$) with respect to all bit strings of
length $d^k$ where $d$ is the size of the current domain.
The required program is as follows.
Note that for readability, none of the indices are in binary.
\[
\begin{array}{rll}
1:&\empty{ }\ \ X_{true}\ &:-\ \ \psi'\\
2:&\empty{ }\ \ 365\        &:-\ \ X_{true}\\
3:&\empty{ }\ \ 6\        &:-\ \ \mathit{max}(Z)\\
4:&\empty{ }\ \ Z(x_1,\dots , x_k)\        &:-\ \ \mathit{step}_Z(x_1,\dots , x_k)\\
5:&\empty{ }\ \          & 1\\
6:&\empty{ }\ \ \        & I\\ 
7:&\empty{ }\ \ Y(x)\        &:-\ \ Y(x) \vee \neg\exists y(x<y)\\
8:&\empty{ }\ \ Z(x_1,\dots , x_k)\        &:-\ \ \bot\\
9:&\empty{ }\ \ \        & 1\\ 
\end{array}
\]
The program tests if $\psi'$ holds, and if not, it modifies $Z$ to be
the next relation with respect to the lexicographic ordering of $k$-tuples
defined with respect to $<$. Once all relations $Z$ have been tested, the 
domain is extended and $Z$ is set to be the
empty relation. The procedure is then repeated.
\end{proof}

Note that we can clearly add all the conditional rules and $3$-transformers (and also a
conditional $3$-transformer), and 
the resulting system will still define precisely the classes of
ordered models in $\mathrm{RE}$. Indeed, we only need to prove that the
stronger system can be simulated by a Turing machine, and 
this is clear. And we can do even more, of course.

A \textbf{conditional rule tuple} is a construct of the form
$$(\mathit{If}\, \varphi_1\, \mathit{then}\, F_1,\, 
\mathit{else\, if}\, \varphi_2\, \mathit{then}\, F_2,\ \dots\ ,\,  
\mathit{else\, if}\, \varphi_{k-1}\, \mathit{then}\, F_{k-1},\ \mathit{else}\, F_k)$$
where each $\varphi_i$ is a first-order $\tau^+$-sentence
and $F_i$ is a rule (of any kind discussed above), and we have $k\geq 1$, so 
singleton tuples are allowed.\footnote{A singleton 
tuple is just a rule $F_1$.} A conditional rule tuple
will occupy a line in a program just like the rules above. For example, if $C$ is a
conditional rule tuple, then for example the line $6$ in the program could be of type 
\[6:\, C\]
A conditional rule tuple is interpreted in the most 
obvious way as follows: take the first
precondition formula $\varphi_i$ that holds and then execute rule $F_i$. If no
precondition rule holds, execute the last rule $F_k$. After executing that one
rule $F_i$, do the following.
\begin{enumerate}
\item
If the rule $F_i$ instructed to jump to line $m$, continue 
from that line. If the line $m$ does not exist, the computation stops.
\item
If the rule $F_i$ did not instruct to jump, go to the next line after the
current conditional rule tuple (having of course executed $F_i$ already). If
that next line does not exist (meaning we are at
the last line of the program), the computation stops.
\end{enumerate}

A \textbf{parallel rule} is a tuple of the form $(G_1,\dots , G_k)$
where each $G_i$ is a conditional rule tuple.
A parallel rule is executed as follows (where we---at first---assume no
deadlocks arise).

First note that each conditional rule tuple $G_i$
determines one rule $F_i$ to be executed. 
These rules $F_1,\dots , F_k$ (one rule for each $G_i$) are executed in parallel as follows. We
first execute the transformation rules $F_i$ (simultaneously, based, on the
current $\tau^+$-model $\mathfrak{M}_{current}$)
that do not involve deleting or adding domain points (so no $I$ or $D$ in the syntax). Being
transformation rules, these
rules do not involve jumps either. This way we obtain a
model $\mathfrak{M}_1$. Then we do the rules $F_i$ with
deletions ($D$ appears in the rule), but not
based on $\mathfrak{M}_1$ but the model $\mathfrak{M}_{current}$ instead. We 
end up with the variant of $\mathfrak{M}_1$ that has the points to be deleted indeed
removed. Note that we remove the 
union of the points that the successful deletion rules instruct to be deleted, and if
there are conditional deletion rules, the condition is evaluated against $\mathfrak{M}_{current}$.
Then we do the additions (one point per successful addition rule; again if the 
rule is conditional, the condition is evaluated with respect to $\mathfrak{M}_{current}$).
The model after the additions is the new model $\mathfrak{M}_{new}$ that the parallel rule constructs.
If deadlocks arise at the above stages (meaning
that at least two rules would treat some head predicate $X$ differently), the 
computation simply halts without the model being modified at all, i.e., with $\mathfrak{M}_{current}$. Note that deletion and addition rules cannot lead to deadlocks.
Finally, after the now described modification step (if the computation did not
lead to a deadlock), we check for control rules in $(F_1,\dots ,F_k)$. If there are no
control rules in $(F_1,\dots ,F_k)$, we
continue from the next line after the 
parallel rule; if there is no next line, the computation ends with $\mathfrak{M}_{new}$
being the final model.
If there are control rules among $(F_1,\dots , F_k)$, we 
first compile a list $L$ of all line numbers where we
should jump, with conditional control rules evaluated 
based on $\mathfrak{M}_{current}$. If there are
different numbers in $L$, this is a
deadlock, and the computation
halts (in this case without the modifications, i.e., with $\mathfrak{M}_{current}$
being the final model).
If there is a single jump instruction in $L$, we jump to the corresponding line and 
continue from there. If that
line does not exist, the computation ends with the modified model $\mathfrak{M}_{new}$.

Note that we can very naturally define parallel rules based on
transformation rules only. Deadlocks can always be avoided by using 
different head symbols in each collection of possibly parallel actions.
By this we mean forbidding the use of the same symbol as a head symbol in
different simultaneous conditional rule tuples $G_i$. By
using different nullary head symbols in the parallel
actions, we can even directly simulate jump rules; such a parallel
rule is then followed by jump rules 
specifying how to jump based on the nullary predicates.

All the above rules can be added to $\mathrm{RLO}$ and we can 
still simulate the resulting logic with a deterministic Turing machine.

We next turn to
the case without order. Most of the ideas and notions will be carried over to the 
following subsection relatively directly.

\subsection{Without order}

Above we investigated the case of models with an order. However, an
order is not really required. Consider the syntax of $\mathrm{RLO}$
(without the assumption of order). Redefine $\mathrm{RLO}$ such
that where we previously had a first-order formula, we can now use a
formula of existential second-order logic.
Call the resulting logic $\mathrm{RL}$ (\emph{rule logic}). The programs are run as
those of $\mathrm{RLO}$. However, models do not have any distinguished order predicate in them.
Note that we can still investigate ordered models with $\mathrm{RL}$, but then $<$ does
not automatically update itself to a linear order when the model domain is 
extended. Instead, now $<$ is
treated as other predicates. Indeed, $I$ just adds a point, and no order relation is extended.

\begin{theorem}\label{freeofordering}
For any $\tau$, 
$\mathrm{RL}$ can define a class $\mathcal{C}$ of 
$\tau$-models iff $\mathcal{C}$ is in $\mathrm{RE}$.
\end{theorem}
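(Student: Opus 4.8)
The plan is to prove both directions, closely mirroring the proof of Theorem~\ref{rlotheorem} but exploiting the one new feature of $\mathrm{RL}$: rule bodies may now be existential second-order formulas, which is exactly what compensates for the absent linear order. For the direction from definability to $\mathrm{RE}$, I would argue that every $\mathrm{RL}$-computation is simulable by a Turing machine. Given a program $\Pi$ that defines $\mathcal{C}$ and an input string $s$, the machine decodes $s$ into a finite $\tau$-model $\mathfrak{M}$ (ignoring the incidental ordering of the domain induced by the bit positions, which $\Pi$ does not use). Since evaluating an existential second-order $\tau^+$-sentence on a finite model is decidable—one ranges over the finitely many interpretations of the quantified predicates—each rule application, domain extension, and control decision is computable, so the machine simulates the deterministic run of $\Pi$ on $\mathfrak{M}$ step by step and accepts exactly when the run halts with $X_{true}$ true. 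Because $\mathfrak{M}\models\Pi$ does not depend on the ordering, the machine accepts $s$ iff $\mathfrak{M}\in\mathcal{C}$, i.e. iff $s$ encodes, under some order, a model of $\mathcal{C}$; this is precisely a semi-decision procedure for $\mathcal{C}$.

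For the converse, suppose $\mathcal{C}\in\mathrm{RE}$. I would invoke the characterization of $\mathcal{L}_{\mathrm{RE}}$ over the class of all finite $\tau$-models (not merely the ordered ones) to fix a defining sentence $\mathrm{I}Y\exists X_1\dots\exists X_n\varphi$. By definition $\mathfrak{M}\models\mathrm{I}Y\exists X_1\dots\exists X_n\varphi$ iff for some finite set $S$ of fresh elements the expansion $(\mathfrak{M}+S,Y\mapsto S)$ satisfies the existential second-order sentence $\exists X_1\dots\exists X_n\varphi$. The idea is to let an $\mathrm{RL}$ program search over the size of $S$ by repeatedly applying the $2$-transformer $I$ (which now merely adds an isolated fresh point), after each addition re-setting the tape predicate $Y$ so that it names exactly the added points, and then testing the body $\exists X_1\dots\exists X_n\varphi$ directly as an existential second-order rule body. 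Because $\varphi$ is evaluated up to isomorphism and the added points are mutually indistinguishable, testing one representative expansion for each size $|S|=0,1,2,\dots$ exhausts all relevant cases.

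Concretely, I would run a loop of the following shape, where $N$ is a line number carrying no rule, so that the conditional control rule on line $4$ halts the computation precisely when $X_{true}$ has just been set:
\[
\begin{array}{rll}
1:&\ \mathit{Orig}(x)\ &:-\ \ \top\\
2:&\ Y(x)\ &:-\ \ \neg\,\mathit{Orig}(x)\\
3:&\ X_{true}\ &:-\ \ \exists X_1\dots\exists X_n\varphi\\
4:&\ N\ &:-\ \ X_{true}\\
5:&\ I\ &\\
6:&\ 2\ &
\end{array}
\]
Rule $1$ records the original domain once—at that moment $\mathit{Orig}$ is the whole domain—and is never revisited, since the back-jump on line $6$ returns to line $2$; rule $2$ therefore always sets $Y$ to the set of currently added points, i.e. the current candidate $S$, and rule $3$ resets $X_{true}$ to the truth value of the test. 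Hence $\mathfrak{M}\models\Pi$ iff some number of iterations produces a set $S$ with $(\mathfrak{M}+S,Y\mapsto S)\models\exists X_1\dots\exists X_n\varphi$, iff $\mathfrak{M}\models\mathrm{I}Y\exists X_1\dots\exists X_n\varphi$, iff $\mathfrak{M}\in\mathcal{C}$; and if $\mathfrak{M}\notin\mathcal{C}$ the loop never sets $X_{true}$ and runs forever, so $\mathfrak{M}\not\models\Pi$.

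The main obstacle is exactly what the order supplied for free in Theorem~\ref{rlotheorem}: without $<$, a freshly added point carries no syntactic marker, so on the bare expanded model there is no first-order—or even second-order—way to single out ``the new element'' in order to extend $Y$ (the ordered program did this via $\neg\exists y(x<y)$ on line $7$). My remedy is the auxiliary tape predicate $\mathit{Orig}$, fixed to the whole domain before any extension, which lets rule $2$ define $Y$ as its complement. Verifying that this faithfully tracks $S$ across iterations, and that the single-representative-per-size search is sound because the fresh points are isolated and hence interchangeable, is the crux of the correctness argument.
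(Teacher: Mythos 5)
Your proposal is correct and takes essentially the same approach as the paper: the paper's program likewise tests $\exists X_1\dots\exists X_n\psi$ directly as an existential second-order rule body, marks the domain with an auxiliary tape predicate ($X_{\mathit{domain}}(x) :- x=x$, recomputed each iteration, with $Y$ accumulated via $Y(x) :- Y(x)\vee\neg X_{\mathit{domain}}(x)$) to identify the order-less fresh points, and loops on $I$ with a halting jump guarded by $X_{true}$. Your variant of fixing $\mathit{Orig}$ once and recomputing $Y$ as its complement is only a cosmetic difference in bookkeeping.
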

\begin{proof}
Simulating $\mathrm{RL}$ with a Turing machine is easy. For the other
direction, we again use $\mathcal{L}_{\mathrm{RE}}$ which defines 
precisely all $\mathrm{RE}$-classes of models (whether or not a
distinguished order is present). We need to find a program 
equivalent to a sentence $IY \exists X_1\dots \exists X_n\, \psi$ where $\psi$ is first-order.
The following does that.
\[
\begin{array}{rll}
1:&\empty{ }\ \ X_{true}\ &:-\ \ \exists X_1\dots \exists X_n \psi\\
2:&\empty{ }\ \ 365\        &:-\ \ X_{true}\\
3:&\empty{ }\ \ X_{domain}(x)\ &:-\ \ x=x\\
4:&\empty{ }\ \ \        & I\\ 
5:&\empty{ }\ \ Y(x)\ &:-\ \ Y(x) \vee \neg X_{domain}(x)\\
6:&\empty{ }\ \ \        & 1\\ 
\end{array}
\]
\end{proof}

To add nondeterminism to $\mathrm{RL}$, we introduce the
following rules.
$$\exists X$$
where $X$ is a relation symbol or tape predicate (of any arity) and
$$\exists(k_1,\dots , k_n)$$
where $(k_1,\dots , k_n)$ is a tuple of positive integers.
The rule $\exists X$ is executed such that we nondeterministically 
choose an interpretation for $X$. (The old interpretation of $X$ is overridden.)
The rule $\exists(k_1,\dots , k_n)$ is executed such that we 
nondeterministically jump to one of the rule numbers $k_1,\dots , k_n$ (that is, we
jump to a line with the chosen number). More rigorously, we nondeterministically
choose one of $k_1,\dots , k_n$ and then attempt to jump to the line with that
number. If such a rule (i.e., line) exists, we
continue from there. If not, the computation ends. We
can also allow for the case $\exists()$ where $()$ is the empty tuple. This simply
terminates the computation.\footnote{Of course similar 
rules $\forall X$ and $\forall(k_1,\dots , k_n)$ could be
defined to allow for an obvious way to include alternation into the picture.}

We call $\mathrm{NRL}$ (where $N$ stands for \emph{nondeterminism}) the 
extension of $\mathrm{RL}$ with non-conditional $3$-transformers and the
rules of the two kinds above, that is, 
rules of type $\exists X$ and $\exists(k_1,\dots , k_n)$.
As already done in \cite{gamesandcomputation}, we also 
study constructions. Consider the class $\mathcal{C}$ of all finite $\tau$-models. We say
that $\mathcal{R}\subseteq \mathcal{C}
\times\mathcal{C}$ is an $\mathrm{RE}$-construction if there 
exists a possibly nondeterministic Turing machine $\mathit{TM}$ 
such that the following holds: we have $(\mathfrak{M},\mathfrak{N})\in \mathcal{R}$ iff with some
input $\mathit{enc}_<(\mathfrak{M})$ for some $<$
there exists some computation such that
$\mathit{TM}$ halts in an 
accepting state such that we have $\mathit{enc}_{<'}(\mathfrak{N})$ for
some $<'$ on the output tape at halting.
Note that there are many natural equivalent formulations of the notion.
We say that $\mathrm{NRL}$ can
\textbf{compute} $\mathcal{R}\subseteq \mathcal{C}\times\mathcal{C}$ if
there exists a program $\Pi$ such that the 
following holds: the program $\Pi$ can halt on the input $\mathfrak{M}\in \mathcal{C}$ 
with $X_{\mathit{true}}$ holding and 
the current $\tau$-model at halting being $\mathfrak{N}\in \mathcal{C}$ iff we
have $(\mathfrak{M},\mathfrak{N})\in \mathcal{R}$.
Note that ``can halt'' here of course means that there exists a 
favourable computation under the different possibilities allowed by the available nondeterminism.
Note also that we do not care about tape predicates when considering 
what the input and output models are: the models $\mathfrak{M}$ and $\mathfrak{N}$ are $\tau$-models,
even though during computation we modify $\tau^+$-models that take into accout tape predicates.

The following is a rather trivial 
variant of Theorem \ref{freeofordering}, now with the
nondeterministic logic $\mathrm{NRL}$.
In the following, $\mathcal{C}$ is the class of all finite $\tau$-models
where $\tau$ is any finite relational vocabulary.

\begin{theorem}\label{constone}
$\mathrm{NRL}$ can
compute $\mathcal{R}\subseteq \mathcal{C}\times\mathcal{C}$
iff $\mathcal{R}$ is an $\mathrm{RE}$-construction. 
\end{theorem}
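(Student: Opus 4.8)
The plan is to prove both directions. The forward direction, that $\mathrm{NRL}$-computability implies being an $\mathrm{RE}$-construction, is routine; the converse I would obtain by reducing the construction problem to the recognition result of Theorem~\ref{freeofordering} via a \textbf{guess-and-verify} strategy.

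For the forward direction, given an $\mathrm{NRL}$ program $\Pi$ I would build a nondeterministic Turing machine $\mathit{TM}$ that keeps an explicit representation of the current $\tau^+$-model on its work tape and simulates $\Pi$ rule by rule. On input $\mathit{enc}_<(\mathfrak{M})$ it first decodes $\mathfrak{M}$; thereafter the (now existential second-order) body of a $1$-transformer is evaluated over the finite current model by guessing the second-order witnesses, the rule $I$ appends a fresh domain element, a $3$-transformer removes the points satisfying its formula, the rule $\exists X$ is simulated by a nondeterministic guess of an interpretation over the current finite domain, and $\exists(k_1,\dots,k_n)$ and conditional control rules are simulated by the corresponding jumps. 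When $\Pi$ halts with $X_{\mathit{true}}$ holding, $\mathit{TM}$ writes $\mathit{enc}_{<'}(\mathfrak{N})$ for the $\tau$-reduct $\mathfrak{N}$ of the current model, with $<'$ the order induced by its tape layout, and accepts. The nondeterministic branches of $\Pi$ correspond exactly to those of $\mathit{TM}$, so $\Pi$ can halt with $X_{\mathit{true}}$ and current $\tau$-model $\mathfrak{N}$ iff $\mathit{TM}$ has an accepting computation outputting some $\mathit{enc}_{<'}(\mathfrak{N})$, which is precisely the definition of an $\mathrm{RE}$-construction.

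For the converse, suppose $\mathcal{R}$ is an $\mathrm{RE}$-construction. The key observation is that the associated \emph{pair class} is recursively enumerable: encode a pair $(\mathfrak{M},\mathfrak{N})$ as a single combined structure consisting of disjoint copies of the two models with their domains distinguished by unary markers, and let $\mathcal{D}$ be the class of such combined structures for which $(\mathfrak{M},\mathfrak{N})\in\mathcal{R}$. Since membership in $\mathcal{R}$ is semi-decidable and $\mathcal{R}$ is invariant under isomorphism (isomorphic finite models admit exactly the same sets of encodings, so the quantification over $<$ and $<'$ in the definition is isomorphism-stable), $\mathcal{D}$ is a genuine isomorphism-closed $\mathrm{RE}$ model class. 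I would then design $\Pi$ as follows. First it stores $\mathfrak{M}$ by marking its domain with a unary tape predicate $U_M$ via $U_M(x) :- x = x$ and copying each $\tau$-relation into a tape predicate. Next it nondeterministically builds a candidate $\mathfrak{N}$: a loop uses $\exists(\,\cdot\,)$ to decide whether to append another fresh point via $I$ (these are marked by $U_N$), and rules of the form $\exists R_N$ guess the candidate relations, restricted to $U_N$. At this stage the tape predicates encode a combined structure of exactly the kind classified by $\mathcal{D}$, so I splice in a copy of the recognition program of Theorem~\ref{freeofordering} for $\mathcal{D}$, relativized to read its input out of these tape predicates rather than out of the $\tau$-relations. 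If that subprogram signals acceptance, a $3$-transformer deletes every point outside $U_N$, the rules $R(\bar{x}) :- R_N(\bar{x})$ install the guessed relations as the real $\tau$-relations of the surviving domain, and $X_{\mathit{true}}$ is set before halting; on this branch the current $\tau$-model is exactly $\mathfrak{N}$. Branches with a wrong guess either loop forever in the recognition phase or halt without $X_{\mathit{true}}$, so $\Pi$ can halt with $X_{\mathit{true}}$ and current model $\mathfrak{N}$ precisely when $(\mathfrak{M},\mathfrak{N})\in\mathcal{R}$.

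The main obstacle is the bookkeeping of two models inside one expanding domain. The delicate points are that shrinking toward the domain of $\mathfrak{N}$ must not corrupt the stored copy of $\mathfrak{M}$ (handled by keeping $\mathfrak{M}$ wholly in tape predicates relativized to $U_M$), and that the spliced recognition program of Theorem~\ref{freeofordering} itself adds scratch points via $I$; these must carry a separate marker, be relativized away from $U_M$ and $U_N$ in every formula, and be discarded during cleanup. Once this relativization discipline is fixed, soundness and completeness of the guess-and-verify scheme follow immediately from the correctness of the recognition program, which is exactly why the statement is a mild variant of Theorem~\ref{freeofordering}.
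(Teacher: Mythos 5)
Your proposal is correct and follows essentially the same route as the paper: both directions are handled by reducing the construction problem to the recognition result of Theorem~\ref{freeofordering}, nondeterministically building a combined structure encoding the pair $(\mathfrak{M},\mathfrak{N})$, running the $\mathrm{RL}$ recognizer for the associated $\mathrm{RE}$ pair class, and then deleting everything but $\mathfrak{N}$. Your version differs only in bookkeeping details (unary markers and tape-predicate copies instead of the paper's equivalence relation $\sim$ and predicate $P$), and is in fact somewhat more careful than the paper about relativizing the spliced-in recognition subprogram and discarding its scratch points.
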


\begin{proof}
This is a trivial variant of Theorem \ref{freeofordering}.
Consider $\tau$-models.
Note that $\mathcal{R}$ is an $\mathrm{RE}$-construction iff
there is a Turing machine that recognizes (in $\mathrm{RE}$) the class $\mathcal{D}$ of $(\tau\cup \{\sim\}\cup {P})$-models 
defined as follows.
\begin{enumerate}
\item
All models in $\mathcal{D}$ consist of a disjoint union of
two models $\mathfrak{A}\in \mathcal{C}$ and $\mathfrak{B} \in \mathcal{C}$.
The additional binary relation $\sim$ is an equivalence relation with two equivalence
classes, one class consisting of the domain of $\mathfrak{A}$ and the
other one of the domain of $\mathfrak{B}$. The unary predicate $P$ is 
true on precisely the elements belonging to one (but not the other) equivalence class.
(Intuitively, the unary predicate $P$ denotes which model is the input model.) Thus $\mathcal{D}$
encodes the pairs $(\mathfrak{C},\mathfrak{D})$ where $P$ denotes the first
member $\mathfrak{C}$ of the pair.
\item
We have $(\mathfrak{A},\mathfrak{B})\in \mathcal{R}$ iff $\mathcal{D}$ contains a
model that represents $(\mathfrak{A},\mathfrak{B})$.
\end{enumerate}
To compute $\mathcal{R}$ with $\mathrm{NLR}$, we write a program that first creates,
when the input model is $\mathfrak{A}$, some model that encodes a 
pair $(\mathfrak{A},\mathfrak{B})$ (see the above description). This construction is 
done non-deterministically, with the possibility to construct any finite $\tau$-model
whatsoever to represent $\mathfrak{B}$.
Then we use Theorem \ref{freeofordering} and the fragment $\mathrm{RL}$ to 
recognize the model class $\mathcal{D}$ as described above, that is, our program then 
makes true the predicate $X_{\mathit{true}}$ and ``halts'' iff the
model encoding $(\mathfrak{A},\mathfrak{B})$ is in $\mathcal{D}$. Note
that ``halts'' here means that we first make an auxiliary predicate $X_{halt}$ true and
then continue the computation as follows.\footnote{We could even require that $X_{true}$ is $X_{halt}$.}
When $X_{halt}$ has become true, we
make sure that only $\mathfrak{B}$ remains as an output model (the tape 
predicates used in the compututation do not count, only the relation symbols in $\tau$). 
For this we use the deletion operator as one of the constructs. After this we halt.
\end{proof}

It is also easy to show that $\mathrm{RL}$ with $3$-transformers added can compute a partial
function $\mathcal{R}\subseteq \mathcal{C}\times\mathcal{C}$ iff $\mathcal{R}$ is a
partial function that is recursively enumerable such that there is Turing
machine for the partial function $\mathcal{R}$. This latter condition 
means that there is a deterministic Turing machine $\mathit{TM}$ such that
given any $\mathfrak{A}\in \mathcal{C}$ 
and any $<$, the machine $\mathit{TM}$ halts on the input $\mathit{enc}_<(\mathfrak{A})$ iff
there exists a model $\mathfrak{B}$ such that $(\mathfrak{A},\mathfrak{B})\in \mathcal{R}$, and
furthermore, the output on halting is then $\mathit{enc}_{<'}(\mathfrak{B})$ for some $<'$.

Now, the statement corresponding to 
Theorem \ref{constone} holds also for the Turing-complete logic $\mathcal{L}[;]$ as
defined in \cite{gamesandcomputation}. We write $(\mathfrak{M},\varphi,\mathfrak{N})$ if
Eloise has a winning strategy in the semantic game beginning with $\mathfrak{M}$ and $\varphi$ and
with Eloise being the verifier and with the general assignment function being empty, and furthermore, that winning strategy always leads to Eloise winning so
that the current model at the time of 
winning is $\mathfrak{N}$. Note that tape predicates (encoded in the general assignment) and first-order variables (also in the general assignment) do not count towards what the 
final model $\mathfrak{N}$ looks like, only the relations in the
signature $\tau$ of the models $\mathfrak{M}$ and $\mathfrak{N}$ count.
The winning strategy can be assumed positional (this clearly makes no difference due to the
positional determinacy of reachability games even on infinite arenas). The winning
strategy could even be
assumed finite (which is typical in the logic $\mathcal{L}[;]$ and its
relatives in \cite{turingcomp}), as by K\"{o}nig's lemma, in every semantic game where
Eloise has a winning strategy, we can
find a bound on how many rounds a play can last before a win occurs.
This follows due to the game-tree
being finitely branching.

We say $\mathcal{L}[;]$ can 
compute $\mathcal{R}$ if there is a formula $\varphi$ of $\mathcal{L}[;]$ such that we
have $(\mathfrak{M},\varphi,\mathfrak{N})$ iff $(\mathfrak{M},\mathfrak{N}) \in \mathcal{R}$.

The following has a similar counterpart proven already in \cite{gamesandcomputation}.
Here $\mathcal{C}$ is the class of all finite $\tau$-models for any finite
relational vocabulary $\tau$.

\begin{theorem}\label{constone2}
$\mathcal{L}[;]$ can
compute $\mathcal{R}\subseteq\mathcal{C}\times\mathcal{C}$ iff $\mathcal{R}$ is an $\mathrm{RE}$-construction.

\end{theorem}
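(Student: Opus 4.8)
The plan is to prove the two directions of the equivalence for $\mathcal{L}[;]$ separately, mirroring the structure of the proof of Theorem \ref{constone}, but now replacing the rule-based program with a formula of the game-theoretic logic $\mathcal{L}[;]$. Since the theorem is stated to have a counterpart already established in \cite{gamesandcomputation}, the cleanest route is to reduce the construction problem to the recognition problem via the same coding trick used in Theorem \ref{constone}: a construction $\mathcal{R}$ is an $\mathrm{RE}$-construction exactly when the class $\mathcal{D}$ of $(\tau\cup\{\sim\}\cup\{P\})$-models encoding pairs $(\mathfrak{A},\mathfrak{B})$ (with $\sim$ an equivalence relation with two classes and $P$ marking the input model) is recognized by some Turing machine. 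So I would first restate this correspondence between $\mathcal{R}$ and $\mathcal{D}$, citing the already-established equivalence in the earlier proof.

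For the easy direction ($\mathcal{L}[;]$ computes $\mathcal{R}$ implies $\mathcal{R}$ is an $\mathrm{RE}$-construction), I would argue that the semantic game of a fixed formula $\varphi$ of $\mathcal{L}[;]$, together with the positional (indeed finite, by the stated K\"{o}nig's lemma argument) winning strategies and the fact that the game graph is finitely branching, can be simulated by a nondeterministic Turing machine: the machine guesses Eloise's moves, tracks the evolving current model, and on reaching a winning position outputs $\mathit{enc}_{<'}(\mathfrak{N})$ for the final model $\mathfrak{N}$, ignoring tape predicates and first-order variables as specified. Thus $(\mathfrak{M},\varphi,\mathfrak{N})$ holds iff a corresponding accepting computation exists, giving exactly an $\mathrm{RE}$-construction.

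For the hard direction, I would build, from a Turing machine recognizing $\mathcal{D}$, a formula $\varphi$ of $\mathcal{L}[;]$ such that $(\mathfrak{M},\varphi,\mathfrak{N})$ holds iff $(\mathfrak{M},\mathfrak{N})\in\mathcal{R}$. The strategy is to let Eloise's play first nondeterministically construct, from the input model $\mathfrak{A}$, a candidate model encoding a pair $(\mathfrak{A},\mathfrak{B})$ for an arbitrary finite $\tau$-model $\mathfrak{B}$ — this uses the model-building capabilities of $\mathcal{L}[;]$ that make it Turing-complete, analogous to how $\mathrm{NRL}$ constructs $\mathfrak{B}$ nondeterministically in Theorem \ref{constone}. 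Eloise then runs the simulation of the recognizing computation for $\mathcal{D}$ inside the semantic game, winning precisely when the encoded pair lies in $\mathcal{D}$; finally the strategy deletes all components except $\mathfrak{B}$ so that the current model at winning is exactly $\mathfrak{N}=\mathfrak{B}$. The main obstacle — and the step deserving the most care — is the faithful encoding of the nondeterministic recognizing computation into a single $\mathcal{L}[;]$ formula so that the final current model at the time of Eloise's win is $\mathfrak{B}$ and nothing else; here one must ensure that the intermediate coding apparatus (the $\sim$ relation, the marker $P$, and all auxiliary tape predicates held in the general assignment) is cleaned up and does not contaminate $\mathfrak{N}$, exactly as the deletion step does in the proof of Theorem \ref{constone}. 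Since \cite{gamesandcomputation} already proves a similar counterpart, I would invoke that Turing-completeness machinery rather than re-deriving the simulation of machines by $\mathcal{L}[;]$ from scratch, and restrict the novel argument to the construction-specific wrapping described above.
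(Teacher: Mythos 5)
Your proposal follows essentially the same route as the paper's proof: both directions are handled exactly as in Theorem \ref{constone}, with Eloise nondeterministically building the model encoding the pair $(\mathfrak{A},\mathfrak{B})$, invoking the Turing-completeness of $\mathcal{L}[;]$ from \cite{gamesandcomputation} to recognize the class $\mathcal{D}$, and then using deletion so that only $\mathfrak{B}$ remains, while the converse direction is a (nondeterministic, via alternating) machine simulation of the semantic game. The only point the paper makes that you omit is the explicit remark that the composition connective $;$ is what prevents Abelard from aborting the game by losing intentionally before the construction phase is complete.
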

\begin{proof}
The proof is almost identical to the proof of the above Theorem. Moreover, the 
technicalities of the argument have
essentially already been given in \cite{gamesandcomputation}.

So, firstly, simulating $\mathcal{L}[;]$ by an alternating Turing-machine is straightforward,
and we can turn this simulation so that it runs with a nondeterministic machine of course.
For the other direction, simulating a Turing-machine 
with $\mathcal{L}[;]$, we write a formula $\varphi$ of $\mathcal{L}[;]$ to do as
described in the proof of Theorem \ref{constone}. Indeed, suppose an input model $\mathfrak{A}$ is
given. The formula lets Eloise construct the model $\mathfrak{S}$ corresponding to $(\mathfrak{A},\mathfrak{B})$ (here anything can potentially be constructed as model $\mathfrak{B}$ by 
Eloise). Then the formula allows Eloise to enter to
play the game with Ablelard to check
whether $\mathfrak{S}$ belongs to $\mathcal{D}$. This part can be done by 
the Turing-completeness of $\mathcal{L}[;]$. After winning 
this game, Eloise simply should still make
sure, using deletion operators, that the output model is $\mathfrak{B}$.
Tape predicates and first-order variables do not count towards what the output (or input) model is.

In the above construction, the composition connective $;$ is used to make sure
Abelard cannot end the game by losing at
some early stage before the constructions are ready. To put this shortly, Abelard
cannot stop the constructions by losing the game intentionally too early in the play.
\end{proof}

\subsection{A general setting}

Consider the following transform rules (with $\varphi$ and $\psi(x)$ 
allowed to be written in existential second-order logic).
\begin{enumerate}
\item
$X(x_1,\dots , x_k)\ :-\ \ \varphi$
\item
$I$
\item
$D\ :-\ \ \psi(x)$
\item
$\exists X$
\end{enumerate}
A \textbf{conditional transformer tuple} is a conditional 
rule tuple 
$$(\mathit{If}\, \varphi_1\, \mathit{then}\, F_1,\, 
\mathit{else\, if}\, \varphi_2\, \mathit{then}\, F_2,\ \dots\ ,\,  
\mathit{else\, if}\, \varphi_{k-1}\, \mathit{then}\, F_{k-1},\ \mathit{else}\, F_k)$$
where each rule $F_i$ is a rule of the above four kinds listed, that is, of type
\begin{enumerate}
\item
$X(x_1,\dots , x_k)\ :-\ \ \varphi$
\item
$I$ 
\item
$D\ :-\ \ \psi(x)$
\item
$\exists X$
\end{enumerate}
where we allow the formulae $\varphi$ and $\psi(x)$ to be 
formulae of existential second-order logic.
Furthermore, any of the sentences $\varphi_i$ in the conditional rule 
tuple can be a sentence of existential second-order logic.
The case $k=1$ is allowed, and then the conditional
transformer tuple is just a rule $F_1$.
A \textbf{non-deterministic transformer} is a tuple
$$(C_1,\dots , C_m)$$
where each $C_i$ is a conditional transformer tuple.
The idea is to consider $(C_1,\dots , C_m)$ as a rule
such that when executed, we nondeterministically 
pick one $C_i$ and execute it. A \textbf{parallel transformer} is a
tuple $(T_1,\dots , T_n)$ where each $T_i$ is a non-deterministic
transformer. These are executed as follows. Beginning
from $$(T_1,\dots , T_n) = \bigl((C_{1,1},\dots , C_{1,m_1}),
\dots , (C_{n,1},\dots , C_{n,m_n})\bigr),$$ we get as output, using  
non-determinism (which can be guided by $n$ different agents or $n$ 
deterministic strategies, possibly 
encoded by $n$ automata), a
tuple $(C_{1,j_1},\dots , C_{n,j_n})$ of conditional transformer tuples. 
The tuple is obtained 
such that each agent $k\in \{1,\dots , n\}$
chooses $C_{k,j_k}$ from $(C_{k,1},\dots , C_{k,m_k})$. The output
tuple $(C_{1,j_1},\dots , C_{n,j_n})$
turns into a tuple $$(F_1,\dots , F_n)$$ of
rules, where $F_k$ is determined based on $C_{k,j_k}$ and its
internal structure. Note that we are thus playing a game where
each agent $k$ makes the nondeterministic choice from $(C_{k,1},\dots , C_{k,m_k})$.
The obtained rule tuple $(F_1, \dots , F_k)$ is then executed in parallel as 
described above when defining the way parallel rules are treated.
Note that if a deadlock is obtained, the computation ends without the
current model being modified. A parallel
transformer is considered a single line of code. It relates to the agents 
making a parallel choice.\footnote{Note that we can guide the computation 
line flow with parallel transformers as well if we (1) encode 
nullary predicates to be modified in the parallel transformer and then (2) 
write further rules (in subsequent lines) that choose the outcome line to be executed based on the 
results of the transformation. Of course we can even define a general parallel rule
where we can directly obtain also control rules, not only transform rules.}

Let a \textbf{conditional flow control rule} be a
tuple of the form 
$$(\mathit{If}\, \varphi_1\, \mathit{then}\, \mathbf{p}_1,\, 
\mathit{else\, if}\, \varphi_2\, \mathit{then}\, \mathbf{p}_2,\ \dots\ ,\,  
\mathit{else\, if}\, \varphi_{k-1}\, \mathit{then}\, \mathbf{p}_{k-1},\ \mathit{else}\, 
\mathbf{p}_k)$$
where each $\mathbf{p}_i$ is a nondeterministic control rule of
type $\exists(k_{i_1},\dots , k_{i_{\ell}})$ as
defined above. The case $k=1$ is of course allowed, 
being the case where the flow 
control rule is just a single rule $\exists(k_{i_1},\dots , k_{i_{\ell}})$.

Let $\mathrm{GRL}$ denote the logic where we have all
parallel transformers and conditional flow control rules.
Let \textbf{sorted} $\mathrm{GRL}$ be the logic 
where each line in each program has one of two labels: $A$ or $G$, for 
\emph{agents} and \emph{general controller}. 
These lines are also called $A$-lines and $G$-lines. The $A$-lines are 
parallel transformers $(T_1,\dots , T_m)$ where $m$ is the same for
each $A$-line of the program (the number of agents). $G$-lines are conditional flow
control rules or lines of  the following types. 
\begin{enumerate}
\item
$X(x_1,\dots , x_k)\ :-\ \ \varphi$
\item
$I$ 
\item
$D\ :-\ \ \psi(x)$
\item
$\exists X$
\end{enumerate}
where all the formulae can be in existential second-order logic.

The point of sorted $\mathrm{GRL}$ is
that we guide systems as defined in the beginning of the article. 
The parallel transformers are guided by agents, $(T_1,\dots , T_m)$ being a tuple for $m$ agents.
The other rules are controlled by the general controller $G$. What the agents are trying to
achieve can be specified in many ways, depending on the
modelling purpose. However, one scenario is that the 
agents are jointly trying to make the system halt with $X_{true}$ holding. We say that the agents have a winning strategy with
the input model $\mathfrak{M}$ if there exist 
functions $f_1,\dots , f_m$ that give the choices for nondeterminism in 
parallel transform rules in computations beginning with $\mathfrak{M}$.
When the functions $f_1,\dots , f_m$ are followed, then every computation
leads to the system halting with $X_{true}$ holding. However, this is just a reachability
game. Many other settings are interesting. It is also of utmost
interest to limit the domains of $f_1,\dots , f_m$. For example, we could 
make each $f_i$ depend only on some single predicate $R_i$, conceived as the 
range of (physical or even perhaps mental)
perception (or horizon) of agent $i$.

\bibliographystyle{plain}
\bibliography{mybib}


\end{document}